\documentclass[a4paper,11pt,reqno,twoside]{amsart}
\numberwithin{equation}{section}

\usepackage[latin1]{inputenc}
\usepackage[T1]{fontenc}


\usepackage{graphicx}
\usepackage{epstopdf}

\usepackage{color}
\definecolor{MyLinkColor}{rgb}{0,0,0.4}
\renewcommand{\bfseries}{\color[rgb]{0,0,0.4} \normalfont}

\newcommand{\N}{\mathbb{N}}
\newcommand{\OO}{\mathcal{O}}
\newcommand{\R}{\mathbb{R}}

\newcommand{\Z}{\mathbb{Z}}

\DeclareMathOperator{\arccot}{arccot}
\DeclareMathOperator{\arctanh}{arctanh}
\DeclareMathOperator{\Diff}{D}

\DeclareMathOperator{\sign}{sgn}

\newtheorem{thm}{Theorem}[section]

\newtheorem{prop}[thm]{Proposition}

\theoremstyle{definition}
\newtheorem{remark}[thm]{Remark}
\newtheorem{solution}{Wave class}

\title[Critical-layer water waves]{Steady water waves with multiple critical layers: interior dynamics}

\subjclass[2000]{76B15; 35Q35}
\keywords{Steady water waves; Small-amplitude waves; Critical layers; Vorticity}

\author[Ehrnstr\"om]{Mats Ehrnstr\"om}
\address{Institut f{\"u}r Angewandte Mathematik, Leibniz Universit{\"a}t Hannover, Welfengarten~1, 30167 Hannover, Germany. }
\email{ehrnstrom@ifam.uni-hannover.de}

\author[Escher]{Joachim Escher}
\address{Institut f{\"u}r Angewandte Mathematik, Leibniz Universit{\"a}t Hannover, Welfengarten~1, 30167 Hannover, Germany. }
\email{escher@ifam.uni-hannover.de}

\author[Villari]{Gabriele Villari}
\address{Dipartimento di Matematica, Viale Morgagni 67/A, 50134 Firenze, Italy.}
\email{villari@math.unifi.it}

\usepackage[colorlinks=true,linkcolor=MyLinkColor,urlcolor=blue,citecolor=red]{hyperref} 

\begin{document}

\begin{abstract}
We study small-amplitude steady water waves with multiple critical layers. Those are rotational two-dimensional gravity-waves propagating over a perfect fluid of finite depth. It is found that arbitrarily many critical layers with cat's-eye vortices are possible, with different structure at different levels within the fluid. The corresponding vorticity depends linearly on the stream function.  
\end{abstract}

\maketitle

\section{Introduction}
This paper is concerned with steady, periodic, two-dimensional gravity-waves of permanent shape and velocity. Famous among these are the \emph{Stokes waves}; symmetric waves with a surface profile which rises and falls once in every minimal period \cite{MR1317348, MR1629555, MR642980, Stokes47, MR1422004}. Of particular interest are waves with vorticity. Vorticity plays a crucial role for wave-current interactions and in the formation of wind-generated waves \cite{Craik1988,Saffman1981, ThomKlop97}. The first mathematical construction of a rotational free-surface fluid flow is due to Gerstner in the beginning of 19th century \cite{MR1819940,MR2434727}, but it was first with the pioneering investigation \cite{ConstantinStrauss04} that a modern theory of both large and small rotational waves was established. This theory was extended to deep water waves in \cite{MR2215274}, to   small-amplitude waves with surface tension in \cite{MR2262949}, and to large-amplitude waves with surface tension and stratification in \cite{W09}.

Many properties inherent in irrotational periodic gravity waves, such as the symmetry of the surface profile \cite{MR1869386,MR1734884}, the analyticity of the streamlines \cite{EC2010}, and the Stokes conjecture \cite{MR666110, Plotnikov1982}, carry over to rotational water waves \cite{MR2362244,EC2010,MR2144685, MR2256915, Varvaruca20094043}. A notable exception from this rule arises when one examines flows with internal stagnation, i.e. points where the velocity of a fluid particle coincides with that of the wave itself. Even when the vorticity is only constant, \emph{critical layers} with cat's-eye vortices arise \cite{MR2409513}. Those are horizontal layers of closed streamlines separating the fluid into two disjoint regions, a behaviour that is not possible for irrotational waves~\cite{MR2257390}. Recently, the existence of such waves with one critical layer as solutions of the full Euler equations was established \cite{Wahlen09} (see also~\cite{cv09} and, for a study of stagnation points in rotational flows, \cite{MR2329144}). This constituted an important connection between the mathematical research on exact Stokes waves and the study of waves with critical layers within the wider field of fluid dynamics (cf. \cite{Choi2003, MR851672, MR668167,Saffman1995, Thorpe1981}). 

In the very recent investigation~\cite{eew10} the theory from~\cite{Wahlen09} was extended to the case of an affine vorticity distribution, yielding the existence of exact small-amplitude gravity waves with \emph{arbitrary many critical layers}. It is our aim to give a qualitative, and to some extent also a quantitative, description of those waves (\cite{eew10} also contains a proof for the existence of exact bichromatic waves with vorticity, but those are not considered here). This is a natural continuation of a line of recent research on the qualitative features of various free-surface flows (see, e.g., \cite{EV09}, and the references \cite{MR2422547, MR2318158, MR2604871, MR2383410, MR2272104, MR2287829, MR2439570, MR2434722, IonescuKruse2009} from that survey). In particular, the results here obtained could be used to give a description of the particle trajectories within linear waves with critical layers.      

The focus of this paper is, however, slightly different. Since the construction in \cite{eew10} is by bifurcation from laminar flows, for small waves it is possible to investigate the flow and give exact estimates on the error (see Proposition~\ref{prop:exact}). It turns out that waves with an affine vorticity distribution can be naturally divided into four wave classes, and we give the velocity fields and the bifurcation relations in the different cases. Theorem~\ref{thm:main} provides the qualitative description of the wave class with multiple critical layers. Although there is a rich variety of flow configurations we discern two main scenarios (which, essentially, can be seen in Figure~\ref{fig:wave24}). An interesting feature is that, apart form the region closest to the free surface, the fluid motion takes place in vertically disconnected, \emph{completely flat} regions (which is not the case in waves with a single critical layer). In Theorem~\ref{thm:stagnation} we are also able to give a \emph{quantitative} description of the levels at which stagnation points, and therefore critical layers, can arise. Those results are graphically captured in Figure~\ref{fig:zeros}.

The disposition is as follows. Section~\ref{sec:preliminiaries} describes the governing equations, with~Section~\ref{sec:laminar} narrowing in on laminar flows and the first-order perturbations thereof. In Section~\ref{sec:classes} we describe the four wave classes, and detail at which levels stagnation can occur in the different cases. Finally, Section~\ref{sec:portraits} presents the main structure of the interesting wave class with multiple critical layers, and some numerical examples are given. For a quick glance at the waves, see the last section.

\section{Preliminaries}\label{sec:preliminiaries}
Let $(x,y)$ be Cartesian position coordinates, and  $(u,v) = (\dot x, \dot y)$ the corresponding velocity field. Here
\[
u := u(t,x,y), \qquad v:= v(t,x,y),
\]
are $2\pi$-periodic in the $x$-variable and the vertical coordinate $y$ ranges from the flat bed at $y = 0$ to the (normalized) free water surface at $y = 1 + \eta(t,x)$. Let $p := p(t,x,y)$ denote the \emph{pressure}, and $g$  the \emph{gravitational constant of acceleration}. In the mathematical theory of steady waves it is common and physically realistic to consider water as inviscid and of constant density \cite{MR1629555,MR642980}. The Euler equations
\begin{subequations}\label{eq:timeproblem}
\begin{equation}\label{eq:timeeuler}
\begin{aligned}
u_t + u u_x + v u_y &= - p_x,\\
v_t + u v_x + v v_y &= -p_y -g,
\end{aligned}
\end{equation}
then model the motion within the fluid. The equations
\begin{equation}\label{eq:timecurl}
u_x + v_y = 0 \qquad\text{ and }\qquad v_x - u_y = \omega
\end{equation}
additionally describes incompressibility and the \emph{vorticity} $\omega$, respectively.\footnote{This sign convention for the vorticity is consistent with~\cite{eew10}. The reader should be advised that the vorticity may also appear with the opposite sign in the literature.} At the surface the conditions
\begin{equation}\label{eq:timesurface}
p = p_0 \quad\text{ and }\quad v = \eta_t + u \eta_x
\end{equation}
separate the air from the water, $p_0$ being the \emph{atmospheric pressure}. Note that the second condition in~\eqref{eq:timesurface} states that $y(t) - \eta(t,x(t))$ is constant over time, so that the same particles constitute the interface at all times. Similarly, no water penetrates the flat bed, whence we have
\begin{equation}\label{eq:timebed}
v = 0 \qquad\text{ at }\quad y = 0.
\end{equation}
\end{subequations}
The equations \eqref{eq:timeproblem} govern the motion of two-dimensional gravitational water waves on finite depth. 

An important class of waves are \emph{travelling waves}, propagating with constant shape and speed. Mathematically, such waves are solutions of \eqref{eq:timeproblem}  with an $(x-ct)$-dependence, where $c > 0$ is the constant wavespeed, and we have restricted attention to waves travelling rightward with respect to the fixed Cartesian frame. Since $\Diff_t (x-ct) = u - c$, it is natural to introduce \emph{steady variables},
\[
X := x - ct, \qquad U := u-c. 
\]
We shall also write $Y$ for $y$ and $V$ for $v$ to indicate when we are in the travelling frame. In the steady variables the fluid occupies
\begin{equation}\label{eq:omegaeta}
\Omega_\eta := \left\{ (X,Y) \in \R^2 \colon 0 < Y < 1 + \eta(X) \right\}.
\end{equation}
Define the \emph{relative pressure} $P$ through
\[
p =: p_0 + g(1+P-Y).
\]
Since the term $-gY$ measures the hydrostatic pressure distribution, the relative pressure is a measure of the pressure perturbation induced by a passing wave. Altogether we obtain the governing equations
\begin{subequations}\label{eqs:steadyproblem}
\begin{equation}\label{eq:steadyeuler}
\begin{aligned}
U U_X + V U_Y &= - g P_X,\\
U V_X + V V_Y &= -g P_Y,\\
U_X + V_Y &= 0,\\
V_X - U_Y &= \omega,
\end{aligned}
\qquad\quad\text{ in }\quad \Omega_\eta
\end{equation}
with boundary conditions
\begin{equation}\label{eq:steadysurface}
\begin{aligned}
P &= \eta,\\ 
V &= U \eta_X,
\end{aligned}
\qquad\quad \text{ on }\quad Y = 1 + \eta(X),
\end{equation}
and
\begin{equation}\label{eq:steadybed}
V = 0, \qquad\quad\text{ on }\quad Y = 0.
\end{equation}
\end{subequations} 
The problem of finding $(U, V, P, \eta)$ such that \eqref{eqs:steadyproblem} is satisfied is known as the \emph{steady water-wave problem}. Since $\eta$ is an \emph{a priori} unknown, \eqref{eqs:steadyproblem} is a free-boundary problem. 

\subsection*{The $\alpha$-problem}\label{subsec:alpha}
When $\eta \in C^3(\R)$, and $u, v \in C^2(\overline \Omega_\eta)$, one can use the fact that the velocity field is divergence-free (cf. \eqref{eq:timecurl}) to introduce a \emph{stream function} $\psi \in C^3(\overline\Omega)$ with
\begin{equation}\label{eq:psi}
\psi_X := -V \quad\text{ and }\quad  \psi_Y := U.
\end{equation}
Define the Poisson bracket $\{ f, g \} := f_X  g_Y - f_Y g_X$.
\begin{prop}[Stream-function formulation]
The water-wave problem \eqref{eqs:steadyproblem} is equivalent to that 
\begin{equation}\label{eq:streamproblem}
\begin{aligned}
\Delta \psi &= - \omega, \qquad&\text{ in }\qquad &\Omega_\eta,\\
\{ \psi, \Delta \psi \} &= 0, \qquad&\text{ in }\qquad &\Omega_\eta,\\
\left| \nabla \psi \right|^2 + 2gy &= C, \qquad&\text{ on }\qquad &Y = 1 + \eta(X),\\ 
\psi &= m_1, \qquad&\text{ on } \qquad &Y = 1 + \eta(X),\\ 
\psi &= m_0, \qquad&\text{ on } \qquad &Y=0,
\end{aligned}
\end{equation}
for some constants $m_0$, $m_1$, and $C$.
\end{prop}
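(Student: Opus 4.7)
The plan is to prove the two implications by using the Poincar\'e lemma to move between the velocity field and the stream function, and by exploiting the momentum equations to extract both the vorticity-transport identity and Bernoulli's law. For the forward direction, start from a classical solution $(U,V,P,\eta)$ of \eqref{eqs:steadyproblem}. Incompressibility $U_X + V_Y = 0$ on the simply connected domain $\Omega_\eta$ gives, via the Poincar\'e lemma, a stream function $\psi \in C^3(\overline{\Omega_\eta})$ with $\psi_Y = U$ and $\psi_X = -V$, unique up to an additive constant which can be fixed so that $\psi \equiv m_0$ on $Y = 0$ (consistent because $\psi_X(X,0) = -V(X,0) = 0$). Then $\Delta \psi = U_Y - V_X = -\omega$ is immediate, and on the free surface the chain rule gives $\tfrac{\mathrm d}{\mathrm dX}\psi(X,1+\eta(X)) = -V + U\eta_X = 0$ by the kinematic condition, so $\psi \equiv m_1$ there.

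Next I would extract the remaining two identities from \eqref{eq:steadyeuler}. Cross-differentiating the momentum equations and eliminating the gradient of $P$ produces $(U,V)\cdot\nabla\omega = 0$, which in stream-function notation is exactly $\{\psi,\Delta\psi\} = 0$. For Bernoulli, the pointwise identities $UU_X + VU_Y = \tfrac{1}{2}\partial_X|\nabla\psi|^2 - V\omega$ and $UV_X + VV_Y = \tfrac{1}{2}\partial_Y|\nabla\psi|^2 + U\omega$ rewrite the momentum equations as
\[
\nabla\bigl(\tfrac{1}{2}|\nabla\psi|^2 + gP\bigr) = -\omega\,\nabla\psi.
\]
Since $\psi \equiv m_1$ on the surface, the surface tangent $(1,\eta_X)$ is $\nabla\psi$-orthogonal, so the tangential derivative of $\tfrac12|\nabla\psi|^2 + gP$ vanishes there; substituting $P = \eta = Y-1$ at $Y = 1+\eta(X)$ yields $|\nabla\psi|^2 + 2gY = C$ for some constant $C$.

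For the converse, given $\psi$ satisfying \eqref{eq:streamproblem} I would set $U := \psi_Y$ and $V := -\psi_X$; incompressibility and $V_X - U_Y = -\Delta\psi = \omega$ are then automatic. The condition $\{\psi, \Delta \psi\} = 0$ says precisely that the vector field $\omega\,\nabla\psi$ is curl-free in $\Omega_\eta$, so on this simply connected domain one defines $gP$ by path integration so that $\nabla(\tfrac12|\nabla\psi|^2 + gP) = -\omega\nabla\psi$, which returns the Euler equations. The two kinematic boundary conditions drop out by reversing the chain-rule identity above. The step I expect to require the most care is the fixing of the free additive constant in $P$: one must ensure that $P = \eta$ holds along the \emph{entire} surface rather than just at a single point. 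This is exactly what the Bernoulli condition $|\nabla\psi|^2 + 2gY = C$ in \eqref{eq:streamproblem} delivers, since it forces $\tfrac12|\nabla\psi|^2 + gP - g\eta$ to already be constant along $Y = 1+\eta(X)$, so a single choice of constant suffices.
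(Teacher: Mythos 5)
Your proposal is correct and follows essentially the same route as the paper: the curl of the momentum equations yields the vorticity-transport/bracket condition, differentiation of the dynamic and kinematic surface conditions yields Bernoulli, and the converse recovers $P$ up to a constant from the Euler equations, with Bernoulli guaranteeing that one choice of constant enforces $P=\eta$ along the whole surface. Your write-up is merely more explicit than the paper's (notably the gradient identity $\nabla\bigl(\tfrac12|\nabla\psi|^2+gP\bigr)=-\omega\nabla\psi$ and the careful fixing of the additive constant, which the paper dispatches with ``with the right choice of constant'').
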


\begin{proof}
Identify $\psi$ with $U$ and $V$ through \eqref{eq:psi}. Given the regularity assumptions and that $\Omega_\eta$ is simply connected, we see that $U_X + V_Y = 0$ is equivalent to the existence of $\psi$. The relations $P = \eta$ in \eqref{eq:steadysurface} and $V=0$ in \eqref{eq:steadybed} mean that $\psi$ is constant on the surface and on the flat bed, just as $V_X - U_Y = -\omega$ means that $\Delta \psi = -\omega$. It remains to show how the equations of motion relate to the Bernoulli surface condition and the bracket condition. 

Given \eqref{eqs:steadyproblem}  one can eliminate the relative pressure by taking the curl of the Euler equations. That yields
\begin{subequations}\label{eqs:nopressure}
\begin{equation}\label{eq:curledeuler}
U \Delta V - V \Delta U = 0.
\end{equation}
Moreover, by differentiating the relation $P = \eta$ along the surface, and using \eqref{eq:steadyeuler}, we find that
\begin{equation}\label{eq:bernoullisurface}
U^2 + V^2 + 2gY = C, \qquad Y = 1 + \eta(X).
\end{equation}
\end{subequations}
Hence \eqref{eq:streamproblem} holds. Contrariwise, if $(U,V)$ fulfil \eqref{eq:curledeuler} and \eqref{eq:bernoullisurface}, one can define $P$ up to a constant through \eqref{eq:steadyeuler}, and, with the right choice of constant, $P$ satisfies \eqref{eq:steadysurface}. 
\end{proof}


Consider now the case when $\psi_Y$ may vanish, but ${\Delta \psi_Y} / \psi_Y$ can be extended to a continuous function on $\overline\Omega_\eta$, i.e. 
\begin{equation}\label{eq:alpha}
\alpha := \frac{\Delta \psi_Y}{\psi_Y} \in C^0(\overline\Omega_\eta)
\end{equation}
One can then exchange the bracket condition $\{ \psi, \Delta \psi \} = 0$ in~\eqref{eq:streamproblem} for 
\begin{equation}\label{eq:streameuleralpha}
\left( \Delta - \alpha \right) \nabla \psi = 0.  
\end{equation}
When $\alpha$ is a constant there exists an affine \emph{vorticity function} $\gamma$ with $\gamma^\prime = -\alpha$, meaning that
\[
\Delta \psi = -\gamma(\psi) = \alpha \psi + \beta, \qquad \beta \in \R.
\]
Observe that this does not rule out the existence of \emph{stagnation points} $\nabla \psi = 0$. Without loss of generality we may take $\beta$ to be zero; changing it corresponds to changing $m_0$ and $m_1$. The choice $\alpha = 0$ models constant vorticity and was investigated in \cite{MR2409513, Wahlen09}. The next natural step is a constant but non-vanishing $\alpha$. That is the setting of this investigation.

\section{Laminar flows and their first-order perturbations}\label{sec:laminar}
\emph{Laminar flows} are solutions of the steady water-wave problem \eqref{eqs:steadyproblem} with $\eta(X) =  0$. Those are the \emph{running streams} for which 
\[
U(X,Y) = U_0(Y) \quad \text{ and }\quad V = P = \eta = 0.
\]
We shall require that $U_0 \in C^2([0,1], \R)$. The function $U_0$ is the (rotational) \emph{background current}, upon which we will impose a small disturbance: the system \eqref{eqs:steadyproblem} will be linearized at a point $(U,V,P,\eta) = (U_0,0,0,0)$, and the solutions of the constructed linear problem analyzed. We thus assume that $U$, $V$, $P$ and $\eta$ allow for expansions of the form
\begin{equation}\label{eq:expansion}
f =  f_0 + \varepsilon f_1 + {\mathcal O}\left(\varepsilon^2\right), \quad\text{ as }\quad \varepsilon \to 0. 
\end{equation}
Here $U_0$ is a background current as described above, and
\[
V_0 = P_0 = \eta_0 = 0.
\]
By inserting these expansions into \eqref{eqs:steadyproblem}, and retaining only first-order terms in $\varepsilon$, we obtain the linearized system  
\begin{subequations}\label{eqs:final}
\begin{equation}\label{eq:finaleuler}
\begin{aligned}
    \partial_X {U_1} + \partial_Y V_{1} &= 0,\\
    U_0 \, \partial_X U_1 + V_1\, \partial_Y U_{0} &= - \partial_X P_{1},\\
    U_0 \, \partial_X V_1   &= - \partial_Y P_{1},
\end{aligned}
\qquad\text{ in }\quad \R \times (0,1)
\end{equation}
with boundary conditions 
\begin{equation}\label{eq:finalsurface}
\begin{aligned}
V_1 &= U_0 \, \partial_X \eta_{1},\\ 
P_1 &= \eta_1, 
\end{aligned}
\qquad\text{ on }\quad Y = 1,
\end{equation}
as well as
\begin{equation}
V_1 = 0 \qquad \text{ on }\quad Y = 0.
\end{equation}
\end{subequations}
The following result allows us to eliminate the relative pressure from~\eqref{eqs:final}.

\begin{prop}\label{prop:vsystem}
Let the background current $U_0$ be given. Under the condition that 
\begin{equation}\label{eq:normalization}
\int_{-\pi}^{\pi} \eta_1(X)\,dX = 0 \quad\text{ and }\quad \int_{-\pi}^{\pi} U_1(X,Y)\,dX = 0, \quad Y \in [0,1],
\end{equation}
the solutions $(U_1,V_1,P_1,\eta_1)$ of \eqref{eqs:final} are in one-to-one-correspondence with the solutions $V_1$ of
\begin{equation}\label{eq:vsystem}
\begin{aligned}
U_0 \Delta V_1 &= U_0^{\prime\prime} V_1, \qquad &0<Y < 1,\\
(1+U_0 U_0^\prime) V_1 &= U_0^2 \, \partial_Y V_1, \qquad	& Y =1,\\
V_1 &= 0, \qquad &Y=0.
\end{aligned}  
\end{equation}
\end{prop}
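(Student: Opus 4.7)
The plan is to prove both directions by eliminating (respectively reconstructing) the pressure $P_1$, the horizontal velocity $U_1$, and the surface elevation $\eta_1$ in terms of $V_1$.

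For necessity, I would take the curl of the linearised Euler equations to get rid of $P_1$: applying $\partial_Y$ to the second equation of \eqref{eq:finaleuler} and subtracting $\partial_X$ of the third, then substituting $\partial_X U_1 = -\partial_Y V_1$ from incompressibility (and its $Y$-derivative $\partial_X \partial_Y U_1 = -\partial_Y^2 V_1$), a short calculation collapses the left-hand side to $U_0 \Delta V_1 - U_0'' V_1$, which gives the interior equation of \eqref{eq:vsystem}. The bottom condition $V_1 = 0$ at $Y=0$ is already present in \eqref{eqs:final}. For the top boundary, I would evaluate the kinematic condition at $Y=1$ to get $\partial_X \eta_1 = V_1/U_0$, then use $P_1 = \eta_1$ along $Y=1$ (differentiated tangentially), the second Euler equation at $Y=1$, and once more incompressibility, to eliminate $U_1$, $P_1$, and $\eta_1$, arriving at $(1 + U_0 U_0') V_1 = U_0^2 \partial_Y V_1$.

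For the converse, given $V_1$ solving \eqref{eq:vsystem}, taken $2\pi$-periodic in $X$ with zero $X$-average (the zero average being a free by-product of the forward direction, since $\partial_Y \overline{V_1} = -\overline{\partial_X U_1} = 0$ and $\overline{V_1}(0)=0$), I would reconstruct the remaining unknowns by integration. Set
\[
U_1(X,Y) := -\int_{0}^{X} \partial_Y V_1(s,Y)\,ds + C(Y),
\]
with $C(Y)$ fixed by the normalisation $\int_{-\pi}^{\pi} U_1(\cdot,Y)\,dX = 0$; periodicity of $U_1$ in $X$ follows from $\int_{-\pi}^{\pi}\partial_Y V_1\,dX = 0$, which is the $Y$-derivative of the zero-mean assumption on $V_1$. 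Similarly define $\eta_1(X) := \int_{0}^{X} V_1(s,1)/U_0(1)\,ds + D$ with $D$ fixed by $\int_{-\pi}^{\pi} \eta_1\,dX = 0$; its periodicity follows from $\overline{V_1(\cdot,1)}=0$. Finally, the second and third Euler equations prescribe $\partial_X P_1$ and $\partial_Y P_1$ separately; compatibility (closedness of the associated 1-form) is exactly the curl identity $U_0 \Delta V_1 = U_0'' V_1$, so $P_1$ is well-defined up to an additive constant, which one fixes by $P_1(0,1) = \eta_1(0)$.

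The main obstacle is checking that the reconstructed $P_1$ agrees with $\eta_1$ along the entire surface $Y=1$, not only at the base point. This reduces to showing $\partial_X(P_1 - \eta_1) = 0$ on $Y=1$, which, after substituting the second momentum equation and the definition of $\eta_1$ and using incompressibility, becomes precisely the top boundary condition in \eqref{eq:vsystem}. Thus the Bernoulli-type surface relation on $V_1$ is exactly what is needed for the reconstructed pressure to satisfy the dynamic surface condition; note in passing the implicit hypothesis $U_0(1) \neq 0$, without which the top boundary condition of \eqref{eq:vsystem} could not be read off.
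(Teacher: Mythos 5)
Your proposal is correct and follows essentially the same route as the paper's own proof: eliminate $P_1$ by taking the curl and differentiating $P_1=\eta_1$ along $Y=1$ for the forward direction, then reconstruct $U_1$, $P_1$, $\eta_1$ from $V_1$ by integration for the converse, with the normalization \eqref{eq:normalization} fixing the residual freedom in $U_1$ and $\eta_1$. Your version simply makes explicit several points the paper leaves implicit (the zero $X$-mean of $V_1$ ensuring periodicity of the reconstructed quantities, the compatibility of the prescribed gradient of $P_1$ being exactly the interior equation, and the tacit hypothesis $U_0(1)\neq 0$), all of which check out.
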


\begin{proof}
Taking the curl of the linearized Euler equations, and
differentiating $p = \eta$ along the
linearized surface $Y=1$ yields~\eqref{eq:vsystem}.
If $(U_1,V_1,P_1,\eta_1)$ is a solution of \eqref{eqs:final}
then $V_1$ fulfills \eqref{eq:vsystem},
and if $V_1$ is a solution of \eqref{eq:vsystem},
then one can find $(U_1,P_1,\eta_1)$ such that
\eqref{eqs:final} holds. One defines $U_1$  through the first equation in \eqref{eq:finaleuler},
and then $P_1$ through the two last equations in \eqref{eq:finaleuler}. The linear surface $\eta_1$ can 
be determined by \eqref{eq:finalsurface}, and the boundary condition at $Y=1$ in \eqref{eq:vsystem}
guarantees that \eqref{eq:finalsurface} is consistent with \eqref{eq:finaleuler}. Notice, however, that for a given $V_1$, a solution $U_1$ is only determined modulo functions $f(y)$, and $\eta_1$ up to a constant. We shall therefore require that the periodic mean of the first-order solution equals that of the running stream, meaning that \eqref{eq:normalization} holds. In particular, this implies that the solution $(U_1, V_1, P_1, \eta_1)$ of \eqref{eqs:final} is unique with respect to the solution $V_1$ of \eqref{eq:vsystem}. 
\end{proof}
  
\subsection*{Laminar vorticity}  
Now, suppose that $U_0^{\prime\prime}/U_0$ can be extended to a continuous function on $[0,1]$, and introduce the \emph{laminar vorticity}
\begin{equation}\label{eq:alpha_0}
\alpha_0 := \frac{U_0^{\prime\prime}}{U_0} \in C([0,1],\R).
\end{equation}
Let (cf. \eqref{eq:vsystem})
\[
\mu_1 := 1+U_0(1) U_0^\prime(1) \quad\text{ and }\quad \mu_2 := U_0^2(1).
\]
According to Proposition~\ref{prop:vsystem} we may then consider the system
\begin{equation}\label{eq:alphalinear}
\begin{aligned}
\Delta V_1 &= \alpha_0 V_1, \qquad &0< Y < 1,\\
\mu_1 V_1 &= \mu_2 \, \partial_Y V_1, \qquad	&Y =1,\\
V_1 &= 0, \qquad &Y=0.
\end{aligned}
\end{equation}  
In our case $\alpha = \alpha_0 \in \R$, and constant vorticity is captured by $\alpha_0 = 0$.
  
\subsection*{Relation to exact nonlinear solutions}
In what comes we will find and investigate four solution classes of \eqref{eq:alphalinear} and thus of \eqref{eqs:final}. 
Any exact solution of the steady water-wave problem~\eqref{eq:streamproblem} with $\Delta \psi = \alpha \psi$ that allows for an expansion as in \eqref{eq:expansion} and adheres to the normalization \eqref{eq:normalization} will satisfy the velocity fields here investigated up to an error of order $\varepsilon^2$ in the appropriate space. A particularly case (Wave class~\ref{class:1} on page~\pageref{class:1}) corresponds to a class of solutions found in \cite{eew10} by linearizing around a running stream with background current $U_0(Y) = a \sin(\theta_0 (Y - 1)+ \lambda)$. Here $\theta_0 = \sqrt{|\alpha_0|}$ and $\lambda \in \R$ is a parameter. Those solutions do not necessarily satisfy the normalization \eqref{eq:normalization}; while $\int_{-\pi}^\pi \eta_1\,dX=0$ the strength of the first-order background current may change with $\varepsilon$. This is the reason why $a$ depends on $\varepsilon$ in the following proposition, which is a consequence of the results from \cite{eew10}. In accordance with~\eqref{eq:omegaeta} we let, for any small and positive constant $\delta$, the set $\Omega_{-\delta}$ denote the part of the fluid domain where $0 < Y < 1 - \delta$.

\begin{prop}\label{prop:exact}
Let $\varepsilon \mapsto (\psi,\eta) \in C^2(\overline\Omega_\eta) \times C^2(\R)$ be a solution curve found in~\cite{eew10} by bifurcation from a one-dimensional kernel of minimal period $2\pi$. Pick $0 < \delta << 1$. For any $\varepsilon$ small enough, there exists $a = a(\varepsilon)$ such that the velocity field $(U,V) = (\psi_Y,-\psi_X)$ coincides with that of wave class~\ref{class:1} with $U_0(Y) = a(\varepsilon) \sin(\theta_0 (Y - 1)+ \lambda)$ up to addition of terms ${\mathcal O}(\varepsilon^2)$ in $C^2(\overline\Omega_{-\delta})$. The map $\varepsilon \mapsto a(\varepsilon)$ is smooth and $a(0)$ fulfils the bifurcation condition \eqref{eq:C1C2<-1}.
\end{prop}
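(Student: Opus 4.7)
The plan is to exploit the smoothness of the bifurcation curve from~\cite{eew10}, Taylor expand at $\varepsilon=0$, and identify the first-order correction with the linearized solution defining Wave class~\ref{class:1}. The essential content is that the $\varepsilon^1$-term of the nonlinear expansion is forced, via Proposition~\ref{prop:vsystem}, into the one-dimensional kernel underlying wave class~\ref{class:1}; the statement then records an error bound and a normalization-matching identification of the parameter $a$.

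More concretely, I would first invoke the construction in~\cite{eew10}, which via a Crandall--Rabinowitz--type argument produces a $C^2$-smooth map $\varepsilon \mapsto (\psi(\varepsilon),\eta(\varepsilon))$ into an appropriate function space, starting at $\varepsilon=0$ from a laminar running stream whose background current is $U_0(Y) = a(0) \sin(\theta_0(Y-1)+\lambda)$, with $a(0)$ the value of the amplitude parameter at the bifurcation point, hence satisfying~\eqref{eq:C1C2<-1}. Because $\|\eta(\varepsilon)\|_{C^2}\to 0$ as $\varepsilon\to 0$, for every small $\delta>0$ and every sufficiently small $\varepsilon$ one has $\overline\Omega_{-\delta}\subset\overline\Omega_{\eta(\varepsilon)}$, so that the velocity field $(U(\varepsilon),V(\varepsilon)) = (\psi_Y(\varepsilon),-\psi_X(\varepsilon))$ admits on $\overline\Omega_{-\delta}$ a $C^2$-convergent Taylor expansion of the form~\eqref{eq:expansion}.

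Next I would identify $a(\varepsilon)$: since the bifurcation in~\cite{eew10} does not hold the amplitude of the laminar part fixed along the branch, I would \emph{define} $a(\varepsilon)$ by absorbing the drift of the $X$-periodic mean of $\psi_Y(\varepsilon)$ into the coefficient of $\sin(\theta_0(Y-1)+\lambda)$, thus writing the leading part as $U_0(Y)=a(\varepsilon)\sin(\theta_0(Y-1)+\lambda)$ with $a$ inheriting smoothness from the bifurcation curve and $a(0)$ fulfilling~\eqref{eq:C1C2<-1}. With this choice, the residual first-order terms $(U_1,V_1,P_1,\eta_1)$ satisfy both the normalization~\eqref{eq:normalization} (the $X$-mean of $U_1$ vanishes by construction, and $\int_{-\pi}^{\pi}\eta_1\,dX=0$ is already part of the hypothesis in~\cite{eew10}) and, upon collecting $\varepsilon^1$-coefficients in~\eqref{eqs:steadyproblem}, the linear system~\eqref{eqs:final}. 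Proposition~\ref{prop:vsystem} then forces $V_1$ to solve~\eqref{eq:alphalinear}, and Wave class~\ref{class:1} is by definition the representative spanning the one-dimensional kernel at the bifurcation parameter; equality of velocity fields up to $\OO(\varepsilon^2)$ in $C^2(\overline\Omega_{-\delta})$ follows.

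The principal technical hurdle I anticipate is the moving-domain issue: the nonlinear solutions live on $\Omega_{\eta(\varepsilon)}$ while the linearized wave class is defined on the fixed strip $\R\times(0,1)$, and $C^2$-convergence of the Taylor remainder really requires controlling derivatives on a domain shared by all small $\varepsilon$. The restriction to $\Omega_{-\delta}$ is precisely what makes this harmless; away from the free surface one never needs to flatten the boundary, and the standard smooth-dependence estimates from the bifurcation construction in~\cite{eew10} deliver the $C^2(\overline\Omega_{-\delta})$-bound on the $\OO(\varepsilon^2)$ remainder directly. The remaining steps---verifying that the extracted $a(\varepsilon)$ is smooth in $\varepsilon$ and that $a(0)$ satisfies~\eqref{eq:C1C2<-1}---are immediate consequences of the implicit-function-type statements already contained in~\cite{eew10}.
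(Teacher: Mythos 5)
Your proposal is correct and follows essentially the same route as the paper, which gives no separate proof but states the proposition as a direct consequence of the results in \cite{eew10}, with the key observation---that the strength of the first-order background current may drift with $\varepsilon$, forcing $a = a(\varepsilon)$---already made in the paragraph preceding the statement. Your elaboration of the Taylor expansion, the identification of the first-order term with the one-dimensional kernel, and the restriction to $\Omega_{-\delta}$ to sidestep the moving boundary is a faithful unpacking of exactly that reasoning.
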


\section{Wave classes}\label{sec:classes}
Even when $\alpha_0$ is a constant, the linear system \eqref{eq:alphalinear} contains a rich variety of solutions, including asymmetric ones (cf. \cite{07022009}). We shall see that restricting attention to the first Fourier mode of $V_1$ still produces a wide range of linear waves. We thus search for a solution of the form     
\begin{equation}\label{eq:ansatz}
V_1 = \sin(X) f(Y), \qquad f \in C^2([0,1],\R). 
\end{equation}
The \emph{ansatz} \eqref{eq:ansatz} reduces the system \eqref{eq:alphalinear} to a (trivial) Sturm--Liouville problem: 
\begin{equation}
\begin{aligned}\label{eq:sturm}
-f^{\prime\prime} + (\alpha_0 +1) f &= 0,\\
\mu_1 f(1) - \mu_2 f^\prime(1) &=0,\\
f(0) &= 0,
\end{aligned}
\end{equation}
with $\alpha_0 \in \R$,  $\mu_1^2 + \mu_2^2 > 0$, and $\mu_2 \geq 0$.
Since the case $\alpha_0 = 0$ has already been treated in \cite{MR2409513} we restrict our attention to $\alpha_0 \neq 0$, corresponding to non-constant vorticity. Define
\[
\theta_0 := \sqrt{|\alpha_0|} \quad\text{ and }\quad \theta_1:= \sqrt{|\alpha_0 +1|}.
\]
Using the Sturm--Liouville problem~\eqref{eq:sturm} to determine $V$, and then $U$ via Proposition~\ref{prop:vsystem}, we find that the solutions belong to one of the following four classes:
\medskip

\begin{solution}[Laminar vorticity $\alpha_0 < -1$]\label{class:1}
The solutions of \eqref{eq:sturm} are generated by $f(Y) = \sin(\theta_1 Y )$ with
\begin{align}
U_0(Y) &= a \sin(\theta_0 (Y - 1)+ \lambda),\notag\\
a^{-2} &= \sin^2{(\lambda)} \left( \theta_1 \cot(\theta_1)  - \theta_0 \cot(\lambda) \right),\label{eq:C1C2<-1}\\
\lambda &\in \left(\arccot\left({\textstyle\frac{\theta_1 \cot(\theta_1)}{\theta_0}}\right),\pi\right).\notag
\end{align}
Up to the first order in $\varepsilon$, 
\begin{equation}\label{eq:alpha<-1}
\begin{aligned}
U(X,Y) & = U_0(Y) + \varepsilon \theta_1 \cos{X} \cos(\theta_1 Y),\\
V(X,Y) &= \varepsilon \sin{X} \sin(\theta_1 Y).\\
\end{aligned}
\end{equation}  

\end{solution}

\medskip
    
\begin{solution}[Laminar vorticity $\alpha_0 = -1$]\label{class:2}
The solutions of~\eqref{eq:sturm} are generated by $f(Y) = Y$ with 
\begin{align}
U_0(Y) &= a \sin(Y - 1+ \lambda),\notag\\
a^{-2} &=  \sin^2(\lambda) \left( 1 - \cot(\lambda) \right), \qquad \lambda \in \left({\textstyle\frac{\pi}{4}},\pi\right). \label{eq:C1C2-1}
\end{align}
Up to the first order in $\varepsilon$, 
\begin{equation}\label{eq:alpha-1}
\begin{aligned}
U(X,Y) & = U_0(Y) + \varepsilon \cos(X),\\
V(X,Y) &= \varepsilon Y \sin(X).
\end{aligned}
\end{equation}  
\end{solution}
\medskip

\begin{solution}[Laminar vorticity $-1 < \alpha_0 < 0$]\label{class:3}
The solutions of \eqref{eq:sturm} are generated by $f(Y) = \sinh(\theta_1 Y)$ with 
\begin{align}
U_0(Y) &= a \sin(\theta_0 (Y - 1)+ \lambda),\notag\\
a^{-2} &= \sin^2{(\lambda)} \left( \theta_1 \coth(\theta_1)  - \theta_0 \cot(\lambda) \right),\label{eq:C1C2>-1<0}\\
\lambda &\in \left(\arccot\left({\textstyle\frac{\theta_1 \coth(\theta_1)}{\theta_0}}\right),\pi\right).\notag
\end{align}
Up to the first order in $\varepsilon$, 
\begin{equation}\label{eq:alpha>-1<0}
\begin{aligned}
U(X,Y) & = U_0(Y) + \varepsilon \theta_1 \cos{X} \cosh(\theta_1 Y),\\
V(X,Y) &= \varepsilon \sin{X} \sinh(\theta_1 Y).
\end{aligned}
\end{equation}  
\end{solution}
\medskip
 
\begin{solution}[Laminar vorticity $\alpha_0 > 0$]\label{class:4}
The solutions of~\eqref{eq:sturm} are generated by $f(Y) = \sinh(\theta_1 Y )$ with
\begin{align}
U_0(Y) &= a \sinh(\theta_0 (Y - 1)) + \lambda \cosh(\theta_0 (Y-1)),\notag\\
a &= \frac{ \lambda^2  \theta_1 \coth(\theta_1) - 1}{ \lambda \theta_0 }, \qquad \lambda \neq 0.\label{eq:C1C2>0}
\end{align}
Up to the first order in $\varepsilon$, 
\begin{equation}\label{eq:alpha>0}
\begin{aligned}
U(X,Y) &= U_0(Y)  + \varepsilon \theta_1 \cos{X} \cosh(\theta_1 Y),\\
V(X,Y) &= \varepsilon \sin{X} \sinh(\theta_1 Y).
\end{aligned}
\end{equation}  
\end{solution}
\medskip

\subsection*{Stagnation}

The explicit solutions allow us to determine the possible levels of stagnation. From the following result one obtains Figure~\ref{fig:zeros}. 

\begin{thm}[Stagnation]\label{thm:stagnation}
The following hold for the background current $U_0$ for the wave classes \ref{class:1}--\ref{class:4}:
\begin{itemize}
\item[W~\ref{class:1}.] For any $Y_0 \in [0,1)$ there exists $\alpha_0 < -1$ and $\lambda$ such that $U_0(Y_0) = 0$, and the number of zeros of $U_0$ in $[0,1]$ can be made arbitrarily large with the appropriate choice of $\alpha_0$.\\[-11pt]

\noindent In the interval $-1 -\pi^2 < \alpha_0 < -1$ the background current $U_0$ has one zero at $Y_0 = 1 - \lambda/\theta_0$ for $\arccot(\theta_1 \cot(\theta_1)/\theta_0) < \lambda \leq \theta_0$, and none for $\lambda \in (\theta_0,\pi)$.\\[-10pt] 
\item[W~\ref{class:2}.]  $U_0$ has one zero at $Y_0 = 1 - \lambda$ for $\lambda \in (\pi/4,1]$ and none for $\lambda \in (1,\pi)$.\\[-10pt] 
\item[W~\ref{class:3}.] $U_0$ has one zero at $Y_0 = 1 - \lambda/\theta_0$ for $\arccot(\theta_1 \coth(\theta_1)/\theta_0) < \lambda \leq \theta_0$ and none for $\lambda \in (\theta_0,\pi)$.\\[-10pt] 
\item[W~\ref{class:4}.] $U_0$ has one zero at $Y_0 = 1 - \theta_0^{-1} \arctanh\left( \frac{\lambda^2 \theta_0}{\lambda^2 \theta_1 \coth(\theta_1)-1} \right)$ for $\lambda^2 \geq (\theta_1 \coth(\theta_1) - \theta_0 \coth(\theta_0))^{-1}$ and none for other $\lambda \neq 0$.\\[-10pt] 
\end{itemize}
\end{thm}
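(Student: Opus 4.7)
The approach is entirely direct: in each wave class the background current $U_0$ is given by an explicit closed-form formula from Section~\ref{sec:classes}, so one reads off its zeros in $[0,1]$ by inverting a sine or an affine combination of hyperbolic functions, and then checks compatibility with the admissible range for $\lambda$.

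The three sinusoidal situations (W~\ref{class:1}, W~\ref{class:2}, W~\ref{class:3}) can be handled uniformly. The zero set of $U_0(Y)=a\sin(\theta_0(Y-1)+\lambda)$ in $[0,1]$ is in bijection with the set of integers $k$ satisfying $k\pi\in[\lambda-\theta_0,\lambda]$, each such $k$ yielding the candidate $Y_k=1+(k\pi-\lambda)/\theta_0$. For W~\ref{class:2} ($\theta_0=1$), W~\ref{class:3} ($\theta_0\in(0,1)$), and the second bullet of W~\ref{class:1} ($\theta_0\in(1,\pi)$, which is exactly $-1-\pi^2<\alpha_0<-1$), this interval has length at most $\pi$ and, combined with $\lambda\in(0,\pi)$, admits only $k=0$; the condition $Y_0\in[0,1]$ then reduces to $\lambda\leq\theta_0$, and intersecting with the admissible interval for $\lambda$ produces the three stated ranges. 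For the first bullet of W~\ref{class:1} the plan is constructive: given $Y_0\in[0,1)$, I would set $\lambda=\theta_0(1-Y_0)\bmod\pi$ and choose $\alpha_0<-1$ so that $\lambda$ also exceeds the admissibility threshold $\arccot(\theta_1\cot(\theta_1)/\theta_0)$, avoiding the discrete values of $\theta_0$ where $\theta_1\cot(\theta_1)$ is singular. The arbitrarily-large-zero-count assertion then follows from the same enumeration, since the length $\theta_0$ of $[\lambda-\theta_0,\lambda]$ tends to infinity as $\alpha_0\to-\infty$.

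The main obstacle is Wave class~\ref{class:4}, where $U_0$ is no longer a pure sine. I would rewrite the stagnation condition as $\tanh(\theta_0(Y_0-1))=-\lambda/a$ and substitute the expression for $a$ from~\eqref{eq:C1C2>0} to obtain
\begin{equation*}
\tanh(\theta_0(Y_0-1))=-\frac{\lambda^2\theta_0}{\lambda^2\theta_1\coth(\theta_1)-1},
\end{equation*}
from which the stated formula for $Y_0$ follows by applying $\arctanh$. A solution $Y_0\in[0,1]$ requires (i) the right-hand side to lie in $(-1,0]$, which forces $\lambda^2\theta_1\coth(\theta_1)>1$, and (ii) the resulting $\arctanh$ to be bounded by $\theta_0$; clearing denominators in (ii) using $\tanh(\theta_0)>0$ rearranges it to $\lambda^2(\theta_1\coth(\theta_1)-\theta_0\coth(\theta_0))\geq 1$. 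Since $x\mapsto x\coth(x)$ is strictly increasing on $(0,\infty)$ and $\theta_1>\theta_0$ when $\alpha_0>0$, the factor $\theta_1\coth(\theta_1)-\theta_0\coth(\theta_0)$ is positive, so this inequality is well-posed and actually implies (i). This yields the stated threshold, and for all other $\lambda\neq 0$ one directly verifies $Y_0\notin[0,1]$.
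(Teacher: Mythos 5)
Your overall strategy---reading off the zeros from the explicit formulas for $U_0$ and intersecting with the admissible $\lambda$-range---is exactly the paper's strategy, and your treatments of W~\ref{class:2}, W~\ref{class:3} and W~\ref{class:4} are correct; in W~\ref{class:4} you in fact supply the computation (clearing the denominator in $\lambda^2\theta_0/(1-\lambda^2\theta_1\coth\theta_1)\geq-\tanh\theta_0$ to get $\lambda^2(\theta_1\coth\theta_1-\theta_0\coth\theta_0)\geq 1$, plus the monotonicity of $x\mapsto x\coth x$) that the paper compresses into ``a closer look yields''.

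There is, however, a concrete error in your argument for the second bullet of W~\ref{class:1}. The range $-1-\pi^2<\alpha_0<-1$ is \emph{not} ``exactly $\theta_0\in(1,\pi)$'': it is $\theta_1=\sqrt{-\alpha_0-1}\in(0,\pi)$, whereas $\theta_0=\sqrt{-\alpha_0}$ ranges over $(1,\sqrt{1+\pi^2})$, and $\sqrt{1+\pi^2}>\pi$. Consequently your key step---``this interval has length at most $\pi$ and, combined with $\lambda\in(0,\pi)$, admits only $k=0$''---fails for $\alpha_0\in(-1-\pi^2,-\pi^2)$: there the window $[\lambda-\theta_0,\lambda]$ has length $\theta_0>\pi$, and the candidate $k=-1$ (a second zero at $Y_{-1}=1-(\lambda+\pi)/\theta_0$) is admitted by the crude bound $\lambda>0$ whenever $\lambda\leq\theta_0-\pi$. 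To exclude it you must invoke the bifurcation condition: since $\theta_1\cot(\theta_1)\leq 1$ for $\theta_1\in(0,\pi)$, one has $\arccot(\theta_1\cot(\theta_1)/\theta_0)\geq\arccot(1/\theta_0)=\arctan(\theta_0)>\arctan(\pi)>\sqrt{1+\pi^2}-\pi>\theta_0-\pi$, so no admissible $\lambda$ is that small. This is precisely the check the paper alludes to with ``by checking that $U_0$ can have at most one zero for $-1-\pi^2<\alpha_0<-1$''; without it your enumeration is incomplete. (Your constructive sketch for the first bullet is in the spirit of the paper's argument---pushing $\alpha_0$ just below $-1-n^2\pi^2$ so that the threshold $\arccot(\theta_1\cot(\theta_1)/\theta_0)$ collapses to $0$---but note the mild circularity that $\lambda=\theta_0(1-Y_0)\bmod\pi$ moves with $\alpha_0$, so you should also perturb to keep $\lambda$ bounded away from $0$.)
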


\begin{figure}
\includegraphics[height=0.35\linewidth,width=0.5\linewidth]{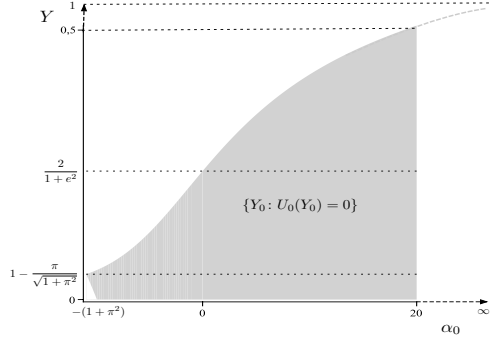}
\caption{\small The possible zeros $Y_0$ of the background current $U_0$ as a (multivalued) function of the laminar vorticity $\alpha_0$; those are the levels of the critical layers in the limit $\varepsilon \to 0$. For each pair $(\alpha_0,Y_0)$ in the shaded region, and only for those, there exists $\lambda$ such that the bifurcation condition is fulfilled and $U_0$ has precisely one zero at $Y_0$. As $\alpha_0 \to \infty$ we have $\max Y_0 \to1$, but for any given $\alpha_0$ the stagnation points are bounded away from the surface as $\varepsilon \to 0$. In contrast, stagnation at the flat bed is possible whenever $\alpha_0 \geq -\pi^2$. For $\alpha_0 < -1-\pi^2$ more zeros appear and the situation is not as transparent.}
\label{fig:zeros}
\end{figure}

\begin{proof}
The analysis is carried out separately for each wave class.

{\bfseries W~\ref{class:1}}. The function
\[
\alpha_0 \mapsto \frac{\theta_1 \cot(\theta_1)}{\theta_0},
\]
spans the real numbers (it blows up at $\alpha_0 = -1 - n^2 \pi^2$, $n \in \N$). We thus see from~\eqref{eq:C1C2<-1} that the set $(\arccot(\theta_1 \cot(\theta_1)/\theta_0), \pi)$ may be empty. But for any $\varepsilon > 0$ and $n \in \N$, there exists $\delta > 0$ such that if $\alpha_0 <  -1 - n^2 \pi^2 < \alpha_0 + \delta$, then \eqref{eq:C1C2<-1} is solvable for all $\lambda \in (\varepsilon,\pi)$. The number of zeros is at least as large as $\lfloor \theta_0/\pi \rfloor \to \infty$ as $\alpha_0 \to -\infty$. The last proposition then follows by checking that $U_0$ can have at most one zero $Y_0$ for $-1 -\pi^2 < \alpha < -1$.

{\bfseries W~\ref{class:2}}.
Consider~\eqref{eq:C1C2-1}. The function
\[
\lambda \mapsto \sin(\lambda) \left( \sin(\lambda) - \cos(\lambda) \right) > 0 \quad\text{ exactly when }\quad \frac{\pi}{4} < \lambda < \pi, 
\]
and since it is bounded, the amplitude $a$ is bounded away from $0$. There thus exist $\lambda$ and $Y_0 \in [0,1]$ such that $U_0(Y_0) = 0$ if and only if $Y_0 = 1 - \lambda  \in [0,1-\textstyle{\frac{\pi}{4}}]$.

{\bfseries W~\ref{class:3}}.
The right-hand side of \eqref{eq:C1C2>-1<0} is positive when
\begin{equation}\label{eq:C2bound}
\arctan \left( \frac{\theta_0}{\theta_1 \coth(\theta_1)} \right) < \lambda < \pi.
\end{equation}
To have $Y_0 \in [0,1]$ with $U_0(Y_0) = 0$ necessarily $\lambda \in [0,\theta_0]$. The assertion follows from that $\theta_0$ is strictly larger than the lower bound in \eqref{eq:C2bound}.
 
{\bfseries W~\ref{class:4}}.
The background current has at most one zero, and to see what zeros there are in $[0,1]$ we consider
\[
\tanh(\theta_0 (Y - 1)) = - \frac{\lambda}{a} = \frac{ \lambda^2 \theta_0 }{1- \lambda^2  \theta_1 \coth(\theta_1) } .
\]
For $Y_0 \in [0,1)$ the left-hand side is negative, so that we must at least have $\lambda^2 > (\theta_1 \coth(\theta_1))^{-1}$, and a closer look yields that $\lambda^2 \geq (\theta_1 \coth(\theta_1) - \theta_0 \coth(\theta_0))^{-1}$ is required to match $Y \geq 0$. The right-hand side is then an increasing function of $\lambda^2$, and
\[
-\tanh(\theta_0) < \frac{ \lambda^2 \theta_0 }{1- \lambda^2  \theta_1 \coth(\theta_1) } \leq \frac{-\theta_0}{\theta_1 \coth(\theta_1)}.
\]   
Since also $\arctanh$ is an increasing function, the question reduces to whether
\[
0 \leq Y_0  <  1- \frac{1}{\theta_0} \, \arctanh\left(\frac{\theta_0 }{\theta_1 \coth(\theta_1) } \right).
\]
The right-hand side is positive, strictly increasing in $\alpha_0$, and tends to $1$ as $\alpha_0 \to \infty$. 
\end{proof}

\section{Hamiltonian formulation and phase-portrait analysis}\label{sec:portraits}
In the analysis to come the region of interest is 
\[
0 \leq Y \leq 1 \pm {\mathcal O}(\varepsilon) \cos(X),
\]   
where the sign indicates that for each of the wave classes~\ref{class:1}--\ref{class:4} one finds that $X=0$ may be either a crest or a trough, depending on the signs of $a$ and $\lambda$. Recall that $(u,v) = (\dot x, \dot y)$. In view of that $X = x- ct$ and $U = u - c$, one similarly obtains 
\[
(\dot X, \dot Y) = (U,V).
\]
The paths $(X(t),Y(t))$ describe the particle trajectories in the steady variables, and any such solution is entirely contained in one streamline. 

\begin{prop}[Hamiltonian formulation]
The wave classes \ref{class:1}--\ref{class:4} all admit a Hamiltonian
\begin{equation}\label{eq:hamiltonian}
H(X,Y) := \varepsilon \cos(X) G(Y) + \int_0^Y U_0(s)\,ds, 
\end{equation}
with
\begin{equation}\label{eq:G}
G(Y) :=
\begin{cases}
\sin(\theta_1 Y), \qquad&\text{ for wave class}~\ref{class:1},\\
Y, \qquad&\text{ for wave class}~\ref{class:2},\\
\sinh(\theta_1 Y), \qquad&\text{ for wave classes}~\ref{class:3}\text{--}\ref{class:4}.\\ 
\end{cases}
\end{equation}
\end{prop}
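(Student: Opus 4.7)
The proof is a direct verification: I would show that in each of the four cases the stated $H$ is a Hamiltonian for the planar system $(\dot X, \dot Y) = (U,V)$, where $(U,V)$ is read off from \eqref{eq:alpha<-1}, \eqref{eq:alpha-1}, \eqref{eq:alpha>-1<0}, \eqref{eq:alpha>0} respectively. Concretely, I would check that
\[
\partial_Y H = U(X,Y) \qquad\text{and}\qquad -\partial_X H = V(X,Y),
\]
so that the particle trajectories coincide with the level curves of $H$.

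The plan is to first compute the two partial derivatives of \eqref{eq:hamiltonian} in the generic form:
\[
\partial_Y H(X,Y) = \varepsilon \cos(X)\, G'(Y) + U_0(Y), \qquad -\partial_X H(X,Y) = \varepsilon \sin(X)\, G(Y).
\]
The fundamental theorem of calculus takes care of the $U_0$ term regardless of the wave class, so only the $\varepsilon$-term requires a case distinction through $G$.

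Next I would run through the four cases. For Wave class~\ref{class:1}, $G(Y) = \sin(\theta_1 Y)$ gives $G'(Y) = \theta_1 \cos(\theta_1 Y)$, and the two partials match $U$ and $V$ in \eqref{eq:alpha<-1} identically. For Wave class~\ref{class:2}, $G(Y) = Y$ gives $G'(Y) = 1$, matching \eqref{eq:alpha-1}. For Wave classes~\ref{class:3} and~\ref{class:4}, $G(Y) = \sinh(\theta_1 Y)$ gives $G'(Y) = \theta_1 \cosh(\theta_1 Y)$, matching \eqref{eq:alpha>-1<0} and \eqref{eq:alpha>0}. In each case one verifies $\dot X = \partial_Y H$ and $\dot Y = -\partial_X H$ by inspection.

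There is no real obstacle here; the only thing to be careful about is the sign convention for the Hamiltonian. Since the ansatz \eqref{eq:ansatz} gives $V_1$ as a multiple of $\sin(X)$ while $\partial_X \cos(X) = -\sin(X)$, the choice of $\cos(X)$ in \eqref{eq:hamiltonian} yields exactly the right sign for $-\partial_X H = V$, so no extra constant of integration or sign flip is needed. I would conclude by remarking that $H$ is a conserved quantity along trajectories, which justifies the phase-portrait analysis that follows.
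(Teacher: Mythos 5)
Your verification is correct: computing $\partial_Y H = \varepsilon\cos(X)G'(Y)+U_0(Y)$ and $-\partial_X H = \varepsilon\sin(X)G(Y)$ and matching these against the velocity fields \eqref{eq:alpha<-1}--\eqref{eq:alpha>0} in each of the four cases is exactly the routine check the paper leaves implicit (the proposition is stated without proof). Nothing is missing, and your sign remark about $\cos(X)$ versus $\sin(X)$ is the only point that needed care.
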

\medskip
Classes \ref{class:2}--\ref{class:4} can be dealt with as the class $\alpha = 0$ (constant vorticity) in \cite{MR2409513,Wahlen09}, and do not yield any new qualitative results. Indeed, their appearance and the analysis thereof is captured within that of the interesting class~\ref{class:1}.

\begin{figure}
\includegraphics[width=0.4\linewidth]{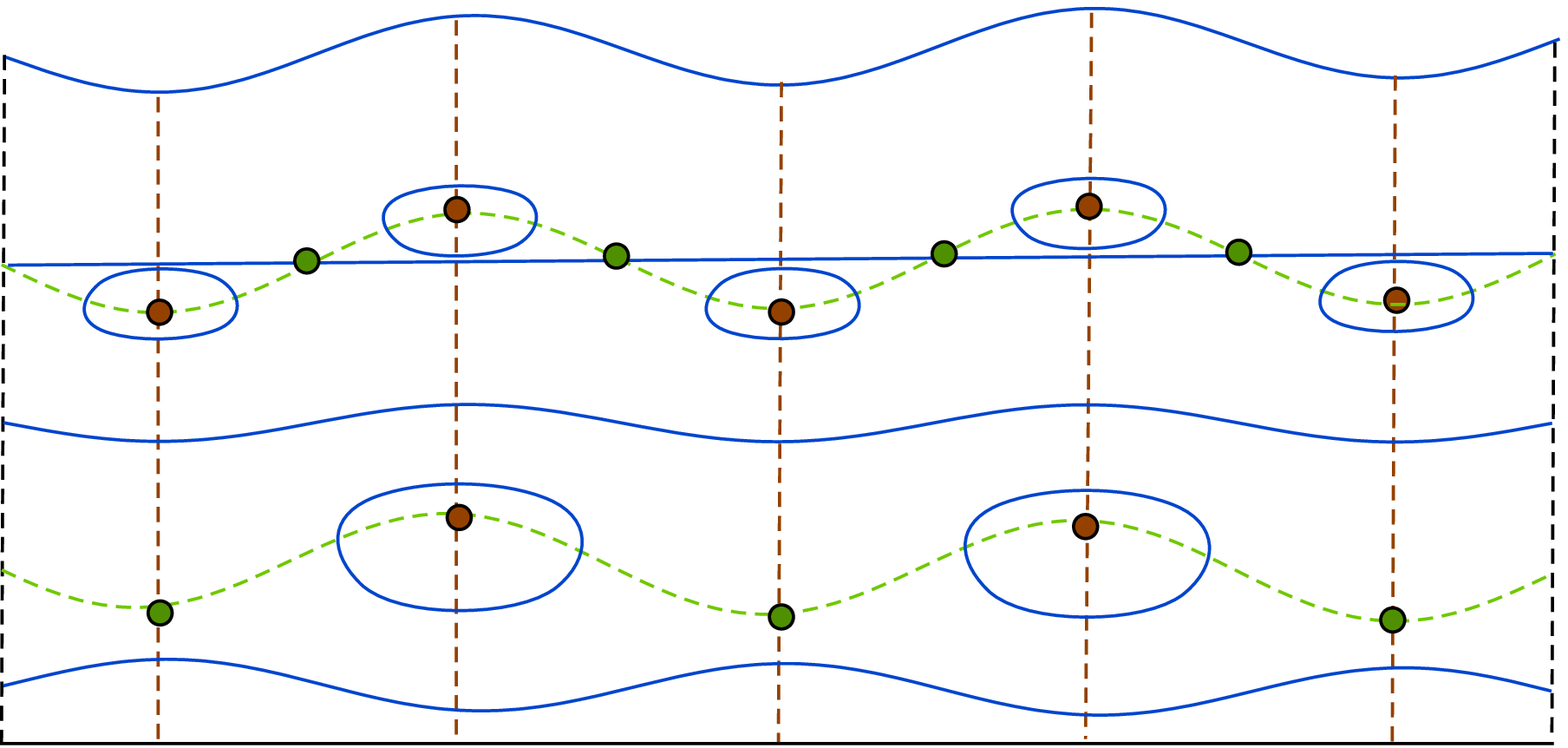}\qquad
\includegraphics[width=0.4\linewidth]{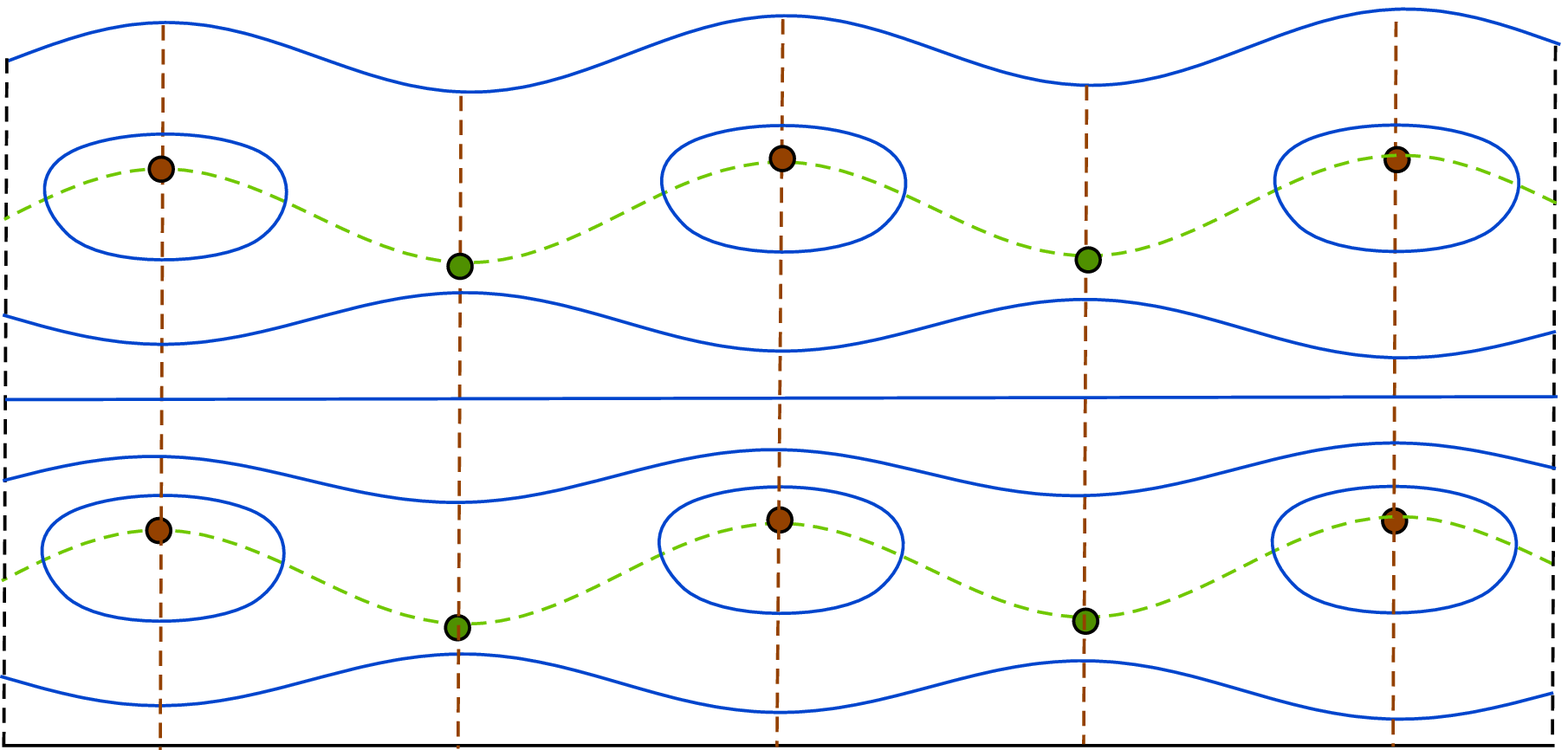}
\caption{\small Two scenarios for wave class~\ref{class:1}. Blue lines are streamlines, dotted lines isoclines [$\infty$-isocline green, $0$-isocline red], and fat dots critical points [centers red, saddle points green]. Left: an uppermost critical layer with horizontal streamlines cutting through it as in Thm.~\ref{thm:main}~ii.b), and a lower critical layer as in Thm.~\ref{thm:main}~ii.a). Right: two critical layers as in Thm.~\ref{thm:main}~ii.a) separated by a horizontal streamline. Note that the rotational flow near the bottom evolves under a ``rigid lid''.  }
\label{fig:wave24}
\end{figure}

\begin{thm}[Wave class 1]
\label{thm:main}
The following hold for small-amplitude waves of wave class~\ref{class:1} ($\varepsilon$ sufficiently small).
\begin{itemize}
\item[i.] The fluid motion is divided into vertical layers, each separated from the others by flat sets of streamlines $\{(X,Y_*) \colon \sin(\theta_1 Y_*) = 0\}$.\\[-10pt]
\item[ii.] For each $Y_*$ with $U_0(Y_*) =0$ there is a smooth connected part of the $\infty$-isocline passing through all points $(\pi/2 + n\pi, Y_*)$, $n \in \Z$,  along which centers (cats-eye vortices) and saddle points alternate in one of the following ways: 
\begin{itemize}
\item[a)] when $Y_*$ is not a common zero of $U_0$ and $\sin(\theta_1 \cdot)$ centers appear at every other $X = n\pi$ and saddle points at every other $(n+1)\pi$;\\[-10pt]    
\item[b)] when $Y_*$ is a common zero of $U_0$ and $\sin(\theta_1 \cdot)$ centers appear at $X = n\pi$ and saddle points at $\pi/2 + n\pi$.\\[-10pt]    
\end{itemize}
\end{itemize}
\end{thm}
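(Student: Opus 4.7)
The plan is to exploit the Hamiltonian structure in~\eqref{eq:hamiltonian} and to classify the critical points of the velocity field $(\dot X,\dot Y)=(U,V)$ by the sign of $\det D^{2}H$, which for a planar Hamiltonian distinguishes centers ($\det D^{2}H>0$) from saddles ($\det D^{2}H<0$). The recurring tool will be the implicit function theorem, enabled by the observation that any zero $Y_{*}$ of $U_{0}(Y)=a\sin(\theta_{0}(Y-1)+\lambda)$ is simple: $U_{0}'(Y_{*})=\pm a\theta_{0}\neq 0$.

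Item~(i) is immediate: from~\eqref{eq:alpha<-1}, $V(X,Y_{*})=\varepsilon\sin(X)\sin(\theta_{1}Y_{*})\equiv 0$ whenever $\sin(\theta_{1}Y_{*})=0$, so every such horizontal line is a streamline and partitions the fluid into the stated vertical layers. For the first claim of~(ii), I view the $\infty$-isocline as the zero set of $F(X,Y):=U_{0}(Y)+\varepsilon\theta_{1}\cos(X)\cos(\theta_{1}Y)$. Since $F(\pi/2+n\pi,Y_{*})=0$ and $F_{Y}=U_{0}'(Y)+\mathcal{O}(\varepsilon)$ is nonzero in a neighbourhood of $\{Y=Y_{*}\}$ for $\varepsilon$ sufficiently small, the implicit function theorem yields a unique smooth $2\pi$-periodic graph $Y=Y_{*}(X)$ passing through every $(\pi/2+n\pi,Y_{*})$; this is the asserted connected branch.

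To classify critical points on this branch, note that they are common zeros of $U$ and $V$. Since $V=\varepsilon\sin(X)\sin(\theta_{1}Y)$, in a neighbourhood of $\{Y=Y_{*}\}$ the $0$-isocline consists of the vertical lines $X=n\pi$ and, \emph{only} in case~b), additionally the streamline $Y=Y_{*}$. In case~a), intersecting with the branch $Y=Y_{*}(X)$ produces one critical point $(n\pi,Y_{n})$ per period with $Y_{n}=Y_{*}+\mathcal{O}(\varepsilon)$. A direct computation from~\eqref{eq:hamiltonian} gives $H_{XY}(n\pi,Y_{n})=0$, $H_{YY}(n\pi,Y_{n})=U_{0}'(Y_{*})+\mathcal{O}(\varepsilon)$ and $H_{XX}(n\pi,Y_{n})=-\varepsilon(-1)^{n}\sin(\theta_{1}Y_{*})+\mathcal{O}(\varepsilon^{2})$, so $\det D^{2}H$ has sign $-(-1)^{n}\sin(\theta_{1}Y_{*})\,U_{0}'(Y_{*})$, alternating with $n$. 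This yields the center/saddle alternation in (ii.a).

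In case~b) the points $(\pi/2+n\pi,Y_{*})$ are themselves critical, with $H_{XX}=0$ (both $\cos(\pi/2+n\pi)$ and $\sin(\theta_{1}Y_{*})$ vanishing), $H_{YY}=U_{0}'(Y_{*})$, and $H_{XY}^{2}=\varepsilon^{2}\theta_{1}^{2}$, so $\det D^{2}H=-\varepsilon^{2}\theta_{1}^{2}<0$: saddles. The critical points on $X=n\pi$ still exist, but now $\sin(\theta_{1}Y_{*})=0$ annihilates the naive leading term of $H_{XX}$, so one must expand carefully. Solving $F(n\pi,Y_{n})=0$ to first order gives
\[
Y_{n}-Y_{*}=-\varepsilon\theta_{1}(-1)^{n}\cos(\theta_{1}Y_{*})/U_{0}'(Y_{*})+\mathcal{O}(\varepsilon^{2});
\]
substituting into $H_{XX}$ and using $\cos^{2}(\theta_{1}Y_{*})=1$ produces $H_{XX}(n\pi,Y_{n})=\varepsilon^{2}\theta_{1}^{2}/U_{0}'(Y_{*})+\mathcal{O}(\varepsilon^{3})$, whence $\det D^{2}H=\varepsilon^{2}\theta_{1}^{2}+\mathcal{O}(\varepsilon^{3})>0$: centers. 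This last computation is the \emph{main obstacle}, since the cancellation forces one to carry both the implicit solution and the Hessian expansion consistently to second order in $\varepsilon$ before the correct sign of $\det D^{2}H$ emerges.
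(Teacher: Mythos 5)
Your proof is correct and follows the same overall strategy as the paper: the Hamiltonian formulation, the implicit function theorem applied to the $\infty$-isocline near a simple zero $Y_*$ of $U_0$, and a sign analysis of the Hessian of $H$ at the critical points (your $\det \Diff^2H$ dichotomy for a Hamiltonian system is the same classification the paper obtains via the Morse lemma). The computations in case~a) and at the saddle points $(\pi/2+n\pi,Y_*)$ in case~b) coincide with \eqref{eq:hessian0} and the paper's off-diagonal Hessian, respectively.

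The one genuine divergence is precisely the step you flag as the main obstacle: showing that the degenerate-looking points $(n\pi, Y_n)$ in case~b) are centers. The paper argues geometrically, using the slope formula \eqref{eq:slope} and the monotonicity of $Y_\infty$ between consecutive multiples of $\pi$ to deduce that $-\varepsilon\sin(\theta_1 Y_\infty|_{X=n\pi})$ and $U_0'(Y_\infty)$ share a sign, so that both diagonal Hessian entries agree in sign. You instead solve $U_0(Y_n)+\varepsilon\theta_1(-1)^n\cos(\theta_1 Y_n)=0$ to first order, substitute into $H_{XX}$, and obtain the clean identity $\det \Diff^2H(n\pi,Y_n)=\varepsilon^2\theta_1^2+\mathcal{O}(\varepsilon^3)>0$, in which both the parity $(-1)^n$ and the sign of $U_0'(Y_*)$ cancel (via $\cos^2(\theta_1 Y_*)=1$). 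Your quantitative expansion is more self-verifying and makes the uniformity in $n$ and in $\sign U_0'(Y_*)$ explicit, at the cost of a second-order bookkeeping that must be carried consistently; the paper's argument is shorter and coordinate-free but leaves the sign bookkeeping implicit in a monotonicity statement. Both routes are valid and reach the same conclusion.
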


\begin{remark}\label{rem:main2}
Starting with the situation in ii.b) one might fix $\varepsilon$ and $\alpha$, and then vary the zero of the background flow. The saddle point at $X = \pi/2$ then continuously and monotonically approaches the center at either $X=0$ or $X=\pi$, eventually merging with it and wiping it out.    
\end{remark}



\begin{figure}
\includegraphics[height=0.3\linewidth,width=0.4\linewidth]{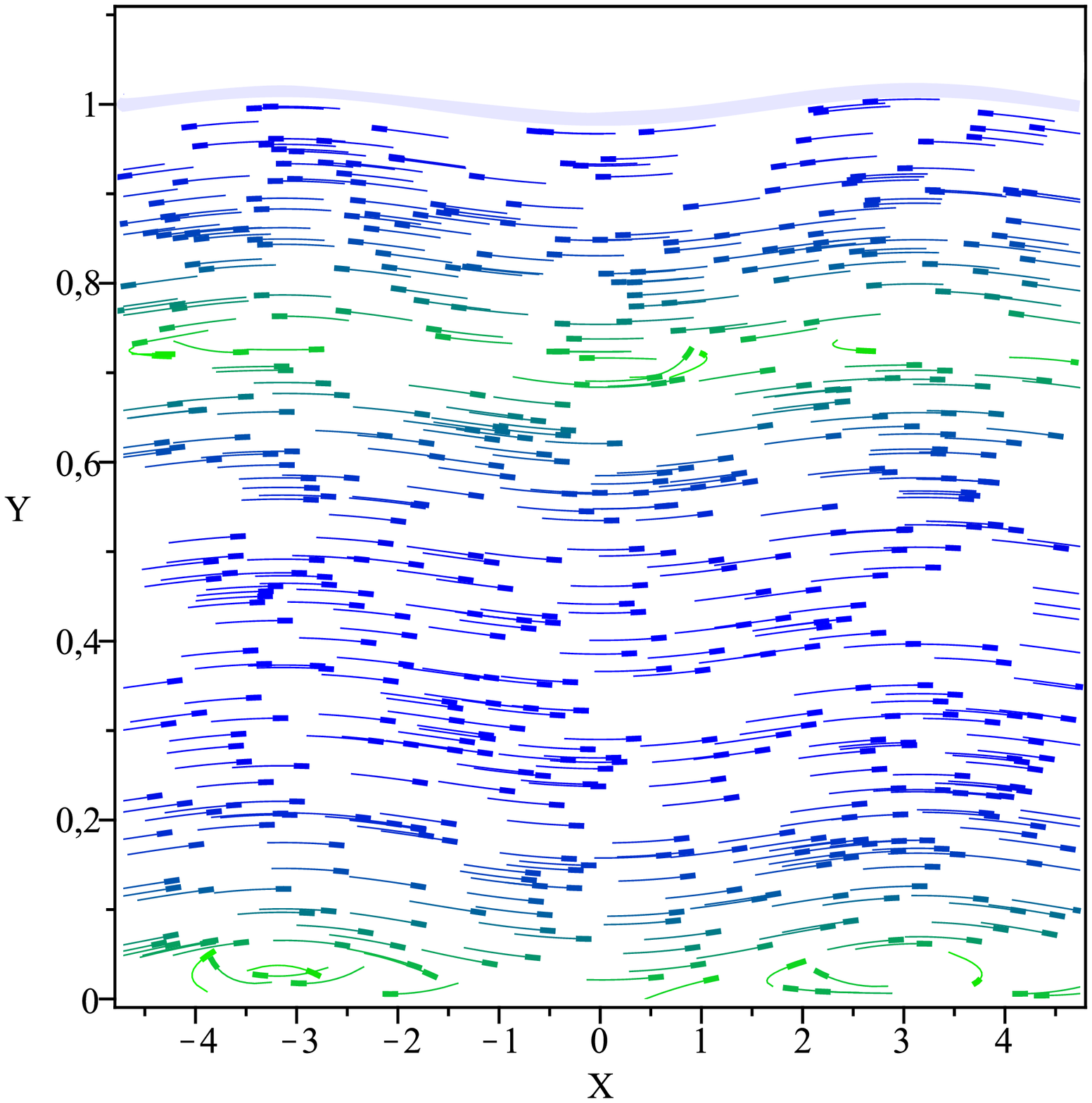}\qquad
\includegraphics[height=0.3\linewidth,width=0.4\linewidth]{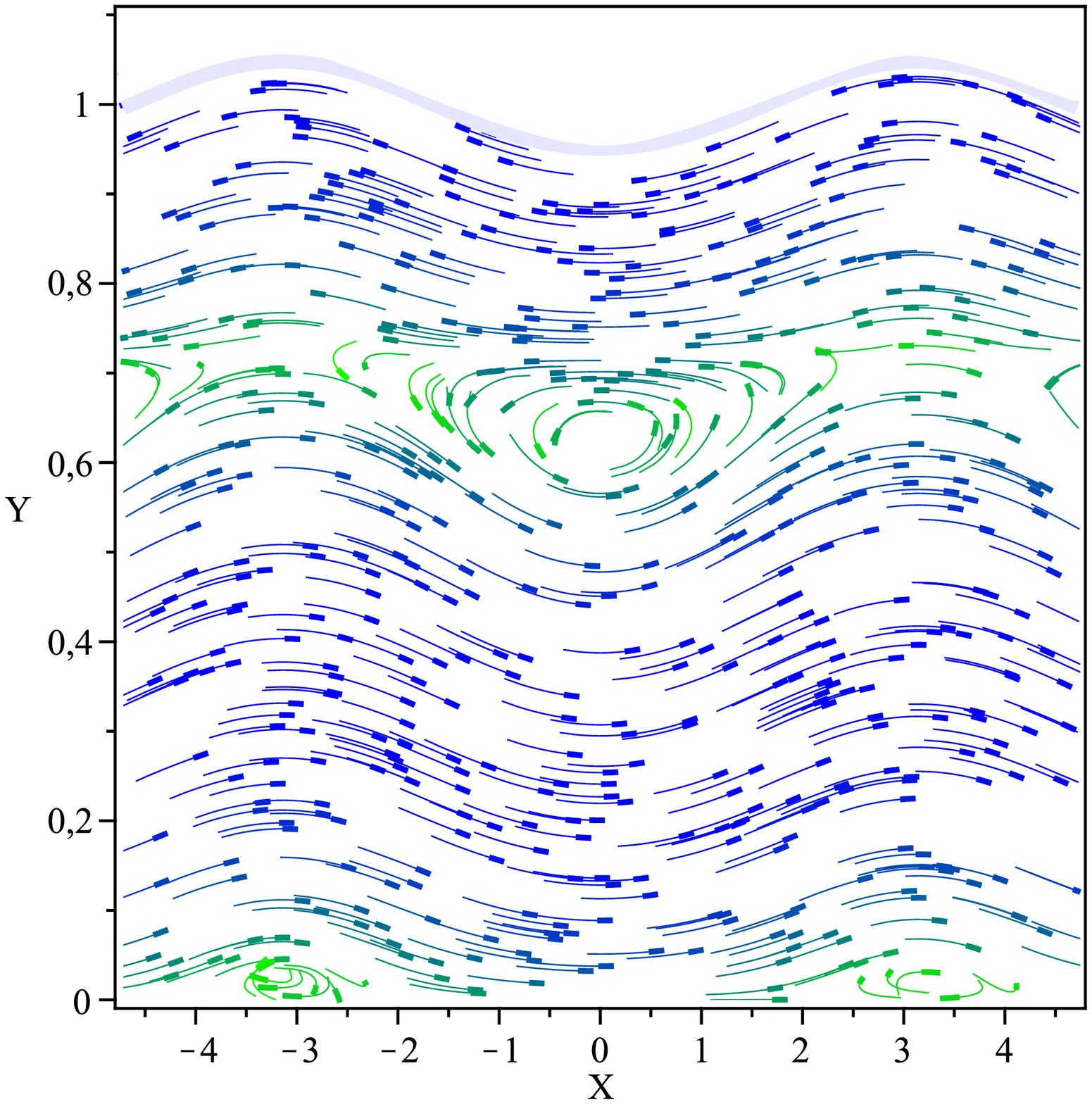}
\includegraphics[height=0.3\linewidth,width=0.4\linewidth]{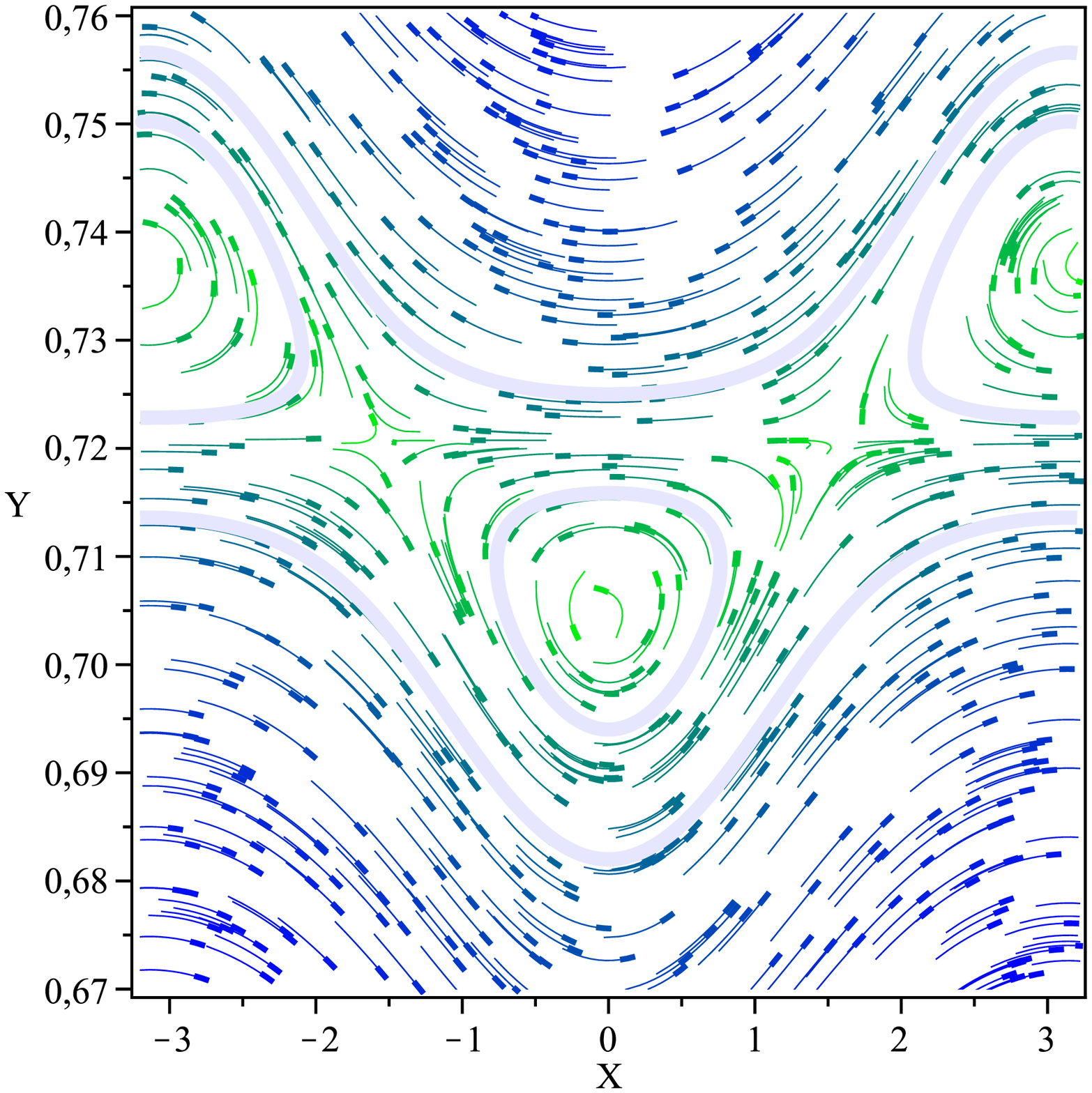}\qquad
\includegraphics[height=0.3\linewidth,width=0.4\linewidth]{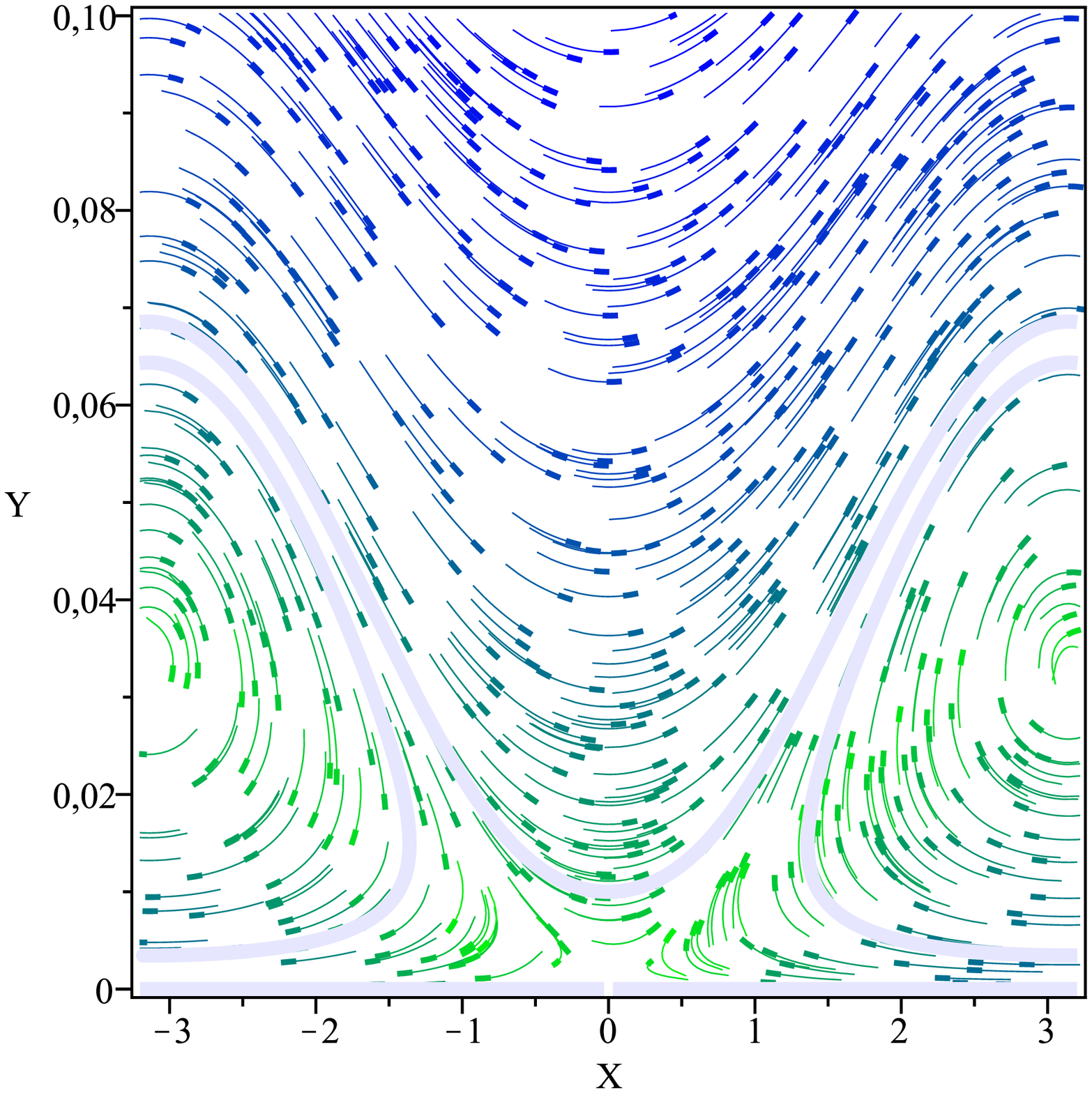}
\caption{\small Numerical plots of wave class~\ref{class:1}. Colours indicate the strength of the velocity field. Top left [$\alpha=-20$, $\lambda=4.39$,  $\varepsilon = 0.05$]: the case of a common zero as in Thm.~\ref{thm:main}~ii.b). The upper critical layer is magnified in the plot bottom left, where the three centers, the two saddle points and the $0$-isocline cutting through them are clearly visible. The bottom right plot shows the same flow near the bottom with only two centers as in Thm.~\ref{thm:main}~ii.a). The situation is similar to that in Figure~\ref{fig:wave24}, left, with the difference that the $\infty$-isocline actually crosses the flat bed in this numerical plot. Right [$\alpha=-20$, $\lambda=4.60$,  $\varepsilon = 0.05$]: as the zero of the background current shifts, the horizontal $0$-isocline climbs above the critical points in the upper critical layer and the saddle nodes merge with one center each, turning the situation in Thm.~\ref{thm:main}~ii.b) to the one in Thm.~\ref{thm:main}~ii.a).}
\label{fig:plot14}
\end{figure}

\begin{proof}
It follows from \eqref{eq:hamiltonian} that the fluid motion is $2\pi$-periodic and symmetric around the vertical $X=0$ axis. It therefore suffices to investigate the strip $0 \leq X \leq \pi$.

\subsection*{i)} The velocity field is 
\begin{align*}
\dot X &= U_0(Y) + \varepsilon \theta_1 \cos(X) \cos(\theta_1 Y),\\ 
\dot Y &= \varepsilon \sin(X) \sin(\theta_1 Y),
\end{align*}
whence the \emph{$0$-isocline}, defined as the set where $\dot Y=0$, is given by the vertical axes $X = 0$ mod $\pi$ and the horizontal lines where $\sin(\theta_1 Y) = 0$. 

\subsection*{ii)} Let $Y_*$ be a zero of $U_0$. Then $(\pi/2,Y_*)$ belongs to the \emph{$\infty$-isocline} $\{(X,Y) \colon \dot X = 0\}$. Since $U_0^\prime(Y_*) = a \theta_0 \neq 0$, we have that
\[
D_Y \dot X = U_0^\prime(Y) - \varepsilon \theta_1^2 \cos(X) \sin(\theta_1 Y)
\]
is nonzero at $(\pi/2,Y_*)$. The implicit function theorem allows us to locally parametrize the $\infty$-isocline as the graph of a smooth function $Y_{\infty}(X)$ with slope 
\begin{equation}\label{eq:slope}
\Diff_X Y_\infty = \frac{\varepsilon \theta_1 \sin(X) \cos(\theta_1 Y)}{U_0^\prime(Y) - \varepsilon \theta_1^2 \cos(X) \sin(\theta_1 Y)}.
\end{equation}
A continuity argument yields that, for $\varepsilon$ small enough, $Y_\infty$ extends to a $2\pi$ periodic function on $\R$, strictly rising and falling between the zeros of $\sin(X)$, and with $|Y_\infty - Y_*| = \OO(\varepsilon)$. Hence $U_0^\prime(Y_\infty) \neq 0$. At $X = n\pi$, $n \in \Z$, the graph of $Y_\infty$ intersects the $0$-isocline. At those \emph{critical points}, the Hessian of the Hamiltonian is given by
\begin{equation}\label{eq:hessian0}
\begin{aligned}
&\Diff^2 H(n\pi,{Y_\infty}|_{X=n\pi})\\ 
&= 
\begin{bmatrix}
(-1)^{n+1}\varepsilon  \sin(\theta_1 {Y_\infty}) &  0\\
0 & (-1)^{n+1} \varepsilon \theta_1^2 \sin(\theta_1 {Y_\infty}) + U_0^\prime({Y_\infty})
\end{bmatrix}\bigg|_{X= n\pi},
\end{aligned}
\end{equation}      
There are now two cases.

\subsection*{a) When $\sin(\theta_1 \cdot)$ and $U_0$ have no common zero}
From $|Y_\infty - Y_*| = \OO(\varepsilon)$ we find that $\sin(\theta_1 Y_\infty)$ is non-vanishing and thus of constant sign. For $\varepsilon$ small enough, the  Hessian~\eqref{eq:hessian0} thus has one negative and one positive eigenvalue at every other $X = n\pi$, and two of the same sign at every other $X = (n+1) \pi$ in between. The assertion~ii.a) then follows from the Morse lemma \cite{MR0163331}.

\subsection*{b) When $\sin(\theta_1 \cdot)$ and $U_0$ have a common zero}
In this case there are additional critical points at $X = \pi/2 + n\pi$, $n \in \Z$, all similar to the one at $X = \pi/2$. There
\[
\Diff^2 H(\pi/2,Y_*) = 
\begin{bmatrix}
0 & -\varepsilon \theta_1 \cos(\theta_1 Y_*) \\
-\varepsilon \theta_1 \cos(\theta_1 Y_*) &  U_0^\prime(Y_*)
\end{bmatrix},
\]
with one strictly positive and one strictly negative eigenvalue. Hence, the critical point $(\pi/2, Y_*)$ is always a saddle point. 

We want to show that the Hessian \eqref{eq:hessian0} has two eigenvalues of the same sign at all critical points $(n\pi, Y_\infty|_{X = n\pi})$. Since the slope of $Y_\infty$ changes direction exactly at $X = n\pi$ it follows that also in the case when $U_0(Y_*) = \sin(\theta_1 Y_*) = 0$ we have $\sin(\theta_1 Y_\infty|_{x = n\pi}) \neq 0$, but with
\[
\sign \sin(\theta_1 Y_\infty|_{X = n\pi} ) = -\sign \sin(\theta_1 Y_\infty|_{X = n\pi} ),
\]
all given that $\varepsilon$ is small enough. We now claim that $-\varepsilon  \sin(\theta_1 {Y_\infty}|_{X=0})$ and $U_0^\prime(Y_\infty|_{X=0})$ have the same sign (cf. \eqref{eq:hessian0}). The slope of $Y_\infty|_{X\in(0,\pi)}$ is determined by the sign of $\cos(\theta_1 Y_\infty)/U_0^\prime(Y_\infty)$ in the same interval. We have $\cos(\theta_1 \cdot) > 0$ when $\sin(\theta_1 \cdot)$ is increasing, and contrariwise. The assertion now follows from that $\sin(\theta_1 Y_*) = 0$ .
\end{proof}

\subsection*{Acknowledgment} ME would like to thank Erik Wahl{\'e}n for interesting discussions. Part of this research was carried out during the Program ``Recent advances in integrable systems of hydrodynamic type'' at the Erwin Schr\"odinger Institute, Vienna; ME and GV thank the organizers for the nice invitation. The authors also acknowledge the comments and suggestions made by the referee, which helped develop the exposition.

\bibliographystyle{siam}
\bibliography{critical_linear}

\end{document}